\documentclass[10pt]{article}

%

\usepackage[T1]{fontenc}										
\usepackage{lmodern}												
\usepackage{microtype}											
\usepackage[LGRgreek,frenchmath]{mathastext}
\usepackage[mathscr]{eucal}                 
\usepackage[sans]{dsfont}										
\usepackage{bbold}													
\usepackage{bbm}														
\usepackage{graphicx}                      	
\usepackage{mdwlist}												
\usepackage{natbib}    											
\bibliographystyle{abbrvnat}
\setcitestyle{authoryear}
\usepackage[dvipsnames]{xcolor}							
\usepackage[plainpages=false, pdfpagelabels, backref=page]{hyperref} 
	\hypersetup{
		colorlinks   = true,
		citecolor    = RoyalBlue, 
		linkcolor    = RubineRed, 
		urlcolor     = Turquoise
	}
\usepackage[paperwidth=8.5in,paperheight=11.00in,top=1.00in, bottom=1.00in, left=0.75in, right=0.75in]{geometry}
\usepackage{mathtools}                      
\mathtoolsset{showonlyrefs=true}            
\linespread{1.3}                            
\usepackage{amsmath}
\usepackage{amssymb}
\usepackage{amsfonts}
\usepackage{amsthm}                         
\allowdisplaybreaks                         
\newtheoremstyle{plain}
  {}   				
  {}   				
  {\itshape}  
  {}       		
  {\mdseries\scshape} 
  {.}         
  { } 				
  {\thmname{#1}\thmnumber{ #2}\ifx#3\empty\else\ (#3)\fi}
\theoremstyle{plain}
\newtheorem{theorem}{\underline{Theorem}}
       	
\newtheorem{proposition}[theorem]{\underline{Proposition}}

\newtheoremstyle{definition}
  {}   				
  {}   				
  {}  				
  {}      		
  {\mdseries\scshape} 
  {.}         
  { } 				
  {\thmname{#1}\thmnumber{ #2}\ifx#3\empty\else\ (#3)\fi}
\theoremstyle{definition}

\newtheorem{example}[theorem]{\underline{Example}}
\newtheorem{remark}[theorem]{\underline{Remark}}

\usepackage{sectsty}
\allsectionsfont{\mdseries\scshape}

%
%


\newcommand{\<}{\langle}
\renewcommand{\>}{\rangle}
\renewcommand{\(}{\left(}
\renewcommand{\)}{\right)}
\renewcommand{\[}{\left[}


\newcommand\Bb{\mathds{B}}

\newcommand\Eb{\mathds{E}}
\newcommand\Fb{\mathds{F}}
\newcommand\Gb{\mathds{G}}
\newcommand\Pb{\mathds{P}}
\newcommand\Qb{\mathds{Q}}
\newcommand\Rb{\mathds{R}}
\newcommand\Ab{\mathds{A}}


\newcommand\Fc{\mathscr{F}}
\newcommand\Gc{\mathscr{G}}

\newcommand\Pc{\mathscr{P}}


\newcommand\Om{\Omega}

\newcommand\gam{\gamma}

\newcommand\lam{\lambda}
\newcommand\del{\delta}
\newcommand\Del{\Delta}


\newcommand\Tb{\overline{T}}






\newcommand\Nt{\widetilde{N}}


\renewcommand\d{\partial}

\newcommand\dd{\mathrm{d}}
\newcommand\ee{\mathrm{e}}


%
%

\begin{document}

\title{Interest rate derivatives in a CTMC setting: pricing, replication and Ross recovery}

\author{
Tim Leung
\thanks{Department of Applied Mathematics, University of Washington.  \textbf{e-mail}: \url{timleung@uw.edu}}
\and
Matthew Lorig
\thanks{Department of Applied Mathematics, University of Washington.  \textbf{e-mail}: \url{mlorig@uw.edu}}
}

\date{This version: \today}

\maketitle

\begin{abstract}
We consider a financial market in which the short rate is modeled by a continuous time Markov chain (CTMC) with a finite state space.  In this setting, we show how to price any financial derivative whose payoff is a function of the state of the underlying CTMC at the maturity date.  We also show how to replicate such claims by trading only a money market account and zero-coupon bonds.  Finally, using an extension of Ross' Recovery Theorem due to Qin and Linetsky, we deduce the real-world dynamics of the CTMC.
\end{abstract}

%
%

\section{Introduction}
In order to model the yield curve, banks and researches often assume the short-rate is a Markov diffusion -- typically an affine term structure (ATS) or quadratic term-structure (QTS) model; see, e.g., \cite[Chapter 5]{filipovic2009term} and \cite[Section 2]{ahn2002quadratic} for an overview of such models.  The Markov diffusion framework is appealing because it allows for explicit computation of bond prices, yields, as well as widely-traded interest rate derivatives such as caplets and floorlets.
\\[0.5em]
However, if one looks at time series data of interest rates, they do not appear to have diffusion-like dynamics.  Indeed, we see in Figure \ref{fig:fed-funds} that the federal Funds rate is constant for long periods and then jumps.  As such, to the extent that the federal funds rate is a proxy for the short-rate, it would be better to model the short-rate as a continuous-time Markov chain (CTMC).  The analytic tractability of CTMC short-rate models was established by \cite{elliot-mamon}, who show how to explicitly compute bond prices, yields, and forward rates.  However, they do not consider the problem of replicating interest rate derivatives with liquidly traded assets nor to they comment on the relation between risk-neutral and real-world dynamics of interest rates, both of which are addressed in the present paper.
\\[0.5em]
The rest of this paper proceeds as follows
In Section \ref{sec:model}, we introduce a model for the short-rate driven by a CTMC.
Next, in Section \ref{sec:pricing}, we derive the prices of claims written on the CTMC.  Examples of such claims include zero-coupon bonds, caplets and floorlets.
In Section \ref{sec:replication}, we show how to replicate all claims by dynamically trading a portfolio consisting of bonds with distinct maturity dates and a money market account.
Next, in Section \ref{sec:ross}, we use an extension of Ross' Recovery Theorem \cite{ross}, due to \cite{linetsky-qin}, to deduce the real-world dynamics of the CTMC that drives short rate dynamics. In a related study, \cite{carryu2012} show how to recover real-world transition probabilities in a bounded time-homogeneous diffusion setup with restrictions on the num\'eraire. portfolio. 
Lastly, in Section \ref{sec:examples}, we perform explicit computations in a simple two-state CTMC setting.

%
%

\section{A CTMC short-rate model and assumptions}
\label{sec:model}
Throughout this paper, we work in the same setting as \cite{elliot-mamon}.  Specifically, we fix a time horizon $\Tb < \infty$ and consider a continuous-time financial market, defined on a filtered probability space $(\Om,\Fc,\Fb,\Qb)$ with no arbitrages and no transaction costs.  The probability measure $\Qb$ represents the market's chosen pricing measure taking the \textit{money market account} $M = (M_t)_{t \in [0, \Tb]}$ as num\'eraire.  The filtration $\Fb = (\Fc_t)_{t \in [0,\Tb]}$ represents the history of the market.
\\[0.5em]
We suppose that the money market account $M$ is strictly positive, continuous and non-decreasing.  As such, there exists a non-negative $\Fb$-adapted \textit{short-rate} process $R = (R_t)_{t \in [0,\Tb]}$ such that
\begin{align}
\dd M_t
	&=	R_t M_t \, \dd t ,  &
M_0
	&> 0 . \label{eq:dM}
\end{align}
We will focus on the case in which the dynamics of the short-rate $R$ are described by an irreducible positive recurrent CTMC $J = (J_t)_{t \in [0,\Tb]}$ with state space $S = \{1,2,\ldots,n\}$.  Specifically, we have
\begin{align}
R_t
	&=	r(J_T)  , &
r
	&: S \to [0,\infty) .
\end{align} 
Let $\Gb$ denote the \textit{generator matrix} of $J$
\begin{align}
\Gb
	&=	\( \begin{array}{cccc}
			g_{1,1} & g_{1,2} & \ldots & g_{1,n} \\
			g_{2,1} & g_{2,2} & \ldots & g_{2,n} \\
			\vdots & \vdots & \ddots & \vdots \\
			g_{n,1} & g_{n,2} & \ldots & g_{n,n} 
			\end{array} \) , &
\sum_j g_{i,j} 
	&= 0 \quad \forall \, i, &
g_{i,j}
	&\geq 0 \quad \forall \, i \neq j .
\end{align}
We can express the dynamics of $J$ as a state-dependent L\'evy-type process
\begin{align}
\dd J_t
	&=	\int z N(\dd t, J_{t-},\dd z) , &
\Eb_{t-} N(\dd t, J_{t-},\dd z)
	&=	\nu(J_{t-},\dd z) \dd t , &
\nu(i,\dd z)
	&=	\sum_{j \neq i} g_{i,j} \del_{j-i}(z) \dd z , \label{eq:dJ}
\end{align}
where $\Eb_{t-} \, \cdot \, := \Eb[ \, \cdot \, | \Fc_{t-} ]$ denotes conditional expectation under $\Qb$. 

%
%

\section{Derivative pricing}
\label{sec:pricing}
Consider a financial derivative that pays $\phi(J_T)$ at time $T \leq \Tb$ where $\phi:S \to (-\infty,\infty)$.  Using risk-neutral pricing, the value of this derivative at time $t \leq T$ is $u(t,J_t;T)$ where the function $u(\,\cdot\,,\,\cdot\,;T): [0,T] \times S \to (-\infty,\infty)$ is defined by
\begin{align}
u(t,i;T)
	&:=	\Eb \Big( \ee^{- \int_t^T r(J_s) \dd s} \phi(J_T) \Big| J_t = i \Big) . \label{eq:u-def}
\end{align}
In the following proposition, we provide an explicit expression for $u(t,i;T)$.

\begin{proposition}
\label{prop:u}
 Define $n \times 1$ vectors $U_t^T$ and $\Phi$ and an $n \times n$ diagonal matrix $\Rb$ by
\begin{align}
U_t^T 
	&:= (u(t,i;T))_{i \in S} , &
\Phi 
	&:= (\phi(j))_{j \in S} , &
\Rb
	&:= ( \del_i(j) r(j) )_{i,j \in S} , \label{eq:U-Phi-R}
\end{align}
Then we have
\begin{align}
U_t^T
	&=	\ee^{(T-t) (\Gb - \Rb) } \Phi , \label{eq:UtT}
\end{align}
where $\ee^{(T-t) (\Gb - \Rb) }$ is the matrix exponential of $(T-t) (\Gb - \Rb)$.
\end{proposition}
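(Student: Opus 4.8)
The plan is to establish the matrix ODE (a Kolmogorov-type backward equation) satisfied by the vector-valued function $t \mapsto U_t^T$ and then solve it explicitly. First I would fix the maturity $T$ and view $U_t^T$ as a function of the current time $t$. The natural starting point is the Feynman--Kac / tower-property argument: for a small time increment $h > 0$, condition on $\Fc_{t+h}$ inside the defining conditional expectation \eqref{eq:u-def} to obtain the flow identity
\begin{align}
u(t,i;T)
	&=	\Eb\Big( \ee^{-\int_t^{t+h} r(J_s)\dd s}\, u(t+h, J_{t+h}; T) \,\Big|\, J_t = i \Big).
\end{align}
This uses only the Markov property of $J$ and the multiplicative structure of the discount factor $\ee^{-\int_t^T r(J_s)\dd s}$.

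Next I would differentiate this identity in $h$ at $h = 0$. The discount term contributes $-r(i)\,u(t,i;T)$, while the expectation of $u(t,J_{t+h};T)$ differentiated in $h$ produces the action of the generator, namely $\sum_j g_{i,j}\, u(t,j;T)$, by the very definition of $\Gb$ as the generator of $J$ (equivalently, reading this off the L\'evy-type representation \eqref{eq:dJ}). Assembling these over all states $i \in S$ and writing things in vector form with the diagonal rate matrix $\Rb$ gives the backward equation
\begin{align}
-\d_t U_t^T
	&=	(\Gb - \Rb)\, U_t^T,  &
U_T^T
	&=	\Phi,
\end{align}
where the terminal condition is immediate from \eqref{eq:u-def} since the discount factor is $1$ and $\phi(J_T)$ is $\Fc_T$-measurable when $t = T$.

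Finally I would solve this linear constant-coefficient matrix ODE. Setting $s := T - t$ turns it into $\d_s U = (\Gb - \Rb) U$ with initial condition $U|_{s=0} = \Phi$, whose unique solution is the matrix exponential $U = \ee^{s(\Gb - \Rb)}\Phi$, yielding \eqref{eq:UtT}. Since $\Gb - \Rb$ is a fixed $n \times n$ matrix, existence, uniqueness, and the exponential form are standard, so this step is routine.

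The main obstacle is making the differentiation step rigorous rather than formal. In particular, one must justify interchanging the derivative in $h$ with the conditional expectation and control the $o(h)$ terms coming from (i) the expansion of the discount factor $\ee^{-\int_t^{t+h} r(J_s)\dd s} = 1 - r(i)h + o(h)$ along paths, and (ii) the transition probabilities $\Pb(J_{t+h} = j \mid J_t = i) = \del_{i}(j) + g_{i,j}h + o(h)$. Because the state space $S$ is finite, $r$ is bounded, and $J$ is a well-behaved CTMC with bounded generator entries, these estimates are uniform and the interchange is justified; this finiteness is precisely what keeps the argument clean. An alternative route that sidesteps the pathwise expansion entirely would be to apply the Dynkin/It\^o formula for the L\'evy-type process in \eqref{eq:dJ} to the process $\ee^{-\int_0^t r(J_s)\dd s} u(t,J_t;T)$, show it is a $\Qb$-martingale, and read off the ODE by setting its drift to zero; I would mention this as the cleaner formalization if the direct limit argument gets cumbersome.
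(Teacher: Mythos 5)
Your proposal is correct and follows essentially the same route as the paper: both reduce the claim to the Kolmogorov backward equation $0 = (\d_t + \Gb - \Rb)U_t^T$ with terminal condition $U_T^T = \Phi$ and then solve this linear constant-coefficient system by the matrix exponential. The only difference is that you derive the KBE from the tower property and the short-time expansion of the transition probabilities, whereas the paper simply asserts it and spends its effort rewriting the generator's action $\sum_{j\neq i} g_{i,j}(u(t,j;T)-u(t,i;T)) - r(i)u(t,i;T)$ in the matrix form $(\Gb-\Rb)U_t^T$; your version fills in a step the paper takes for granted.
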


\begin{proof}
The function $u$ defined in \eqref{eq:u-def} satisfies the Kolmogorov Backward Equation (KBE)
\begin{align}
0
	&=	( \d_t  + \Gc(i) - r(i) ) u(t,i;T) , &
u(T,i;T)
	&=	\phi(i) , \label{eq:u-kbe}
\end{align}
where the operator $\Gc$ is given by
\begin{align}
\Gc(i)
	&=	\int \nu(i,\dd z) \Big( \ee^{z \d} - 1 \Big) 
	=		\sum_{j \neq i} g_{i,j} \Big( \ee^{(j-i) \d} - 1 \Big) . \label{eq:A}
\end{align}
Here, we have used the notation $\ee^{z \d}$ to denote the \textit{shift operator}: $\ee^{z \d} f(j) = f(j+z)$.
Using \eqref{eq:u-kbe} and \eqref{eq:A}, we obtain
\begin{align}
0
	&=	\d_t u(t,i;T) + \sum_{j \neq i} g_{i,j} \Big( u(t,j;T) - u(t,i;T) \Big) - r(i) u(t,i;T) \\
	&=	\d_t u(t,i;T) + \sum_{j} g_{i,j} u(t,j;T) - \sum_j \del_i(j) r(j) u(t,j;T) \\
	&=	\d_t u(t,i;T) + \sum_{j} \Big( g_{i,j} -  \del_i(j) r(j) \Big) u(t,j;T) , \label{eq:u-odes}
\end{align}
where, in the second equality, we have used the fact that $\sum_{j \neq i} g_{i,j} =	- g_{i,i}$.
Using \eqref{eq:U-Phi-R}, we can write the system of coupled ODEs \eqref{eq:u-odes} more compactly as
\begin{align}
0
	&=	( \d_t + \Gb - \Rb ) U_t^T , &
U_T^T
	&=	\Phi ,
\end{align}
from which expression \eqref{eq:UtT} for $U_t^T$ directly follows.
\end{proof}

\begin{remark}
\label{rmk:ui}
For the purposes of computation, it may be convenient to express $u(t,i;T)$ as follows
\begin{align}
u(t,i;T)
	&=	\< E_i , U_t^T \> , &
\< a , b \>
	&:= a^\top b = \sum_{i=1}^n a_i b_i , &
E_i
	&=	(\del_i(j))_{j \in S} .
\end{align}
\end{remark}

\noindent
Let us look at a few examples, which will be important in subsequent sections.

\begin{example}[Zero-coupon bond prices and Yields]
A \textit{$T$-maturity zero-coupon bond} is a derivative that pays $\phi(J_T) = 1$ at time $T \leq \Tb$.
Denote by $B(t,i;T)$ the price of a zero-coupon bond assuming $J_t = i$.  We have
\begin{align}
B(t,i;T)
	&:=	\Eb \Big( \ee^{- \int_t^T r(J_s) \dd s} \Big| J_t = i \Big) .
\end{align}
Defining $n \times 1$ vectors $B_t^T$ and $\mathds{1}$ as follows
\begin{align}
B_t^T 
	&:= (B(t,i;T))_{i \in S} , &
\mathds{1} 
	&:= (1)_{j \in S} ,
\end{align}
we have from Proposition \ref{prop:u} and Remark \ref{rmk:ui} that
\begin{align}
B_t^T
	&=	\ee^{ (T-t) (\Gb - \Rb) } \mathds{1} , &
B(t,i;T)
	&=	\< E_i , B_t^T \> , \label{eq:BtT}
\end{align}
which is equivalent to the \cite[Equation (2)]{elliot-mamon}.  We can also define the \textit{Yield} of a zero-coupon bond as
\begin{align}
Y(t,i;T)
	&:=	\frac{-1}{T-t} \log B(t,i;T) . \label{eq:yield}
\end{align}
\end{example}

\begin{example}[Caplets and Floorlets]
The \textit{simple forward rate from $T$ to $\Tb$} is defined as follows
\begin{align}
F(t,J_t;T,\Tb)
	&:=	\frac{1}{\Tb-T} \Big( \frac{B(t,J_t;T)}{B(t,J_t;\Tb)} - 1 \Big) .
\end{align}
A \textit{forward rate option} with \textit{reset date $T$} and \textit{settlement date $\Tb$} is a derivative that pays $h( F(T,J_T;T,\Tb) )$ at time $\Tb$ for some function $h:\Rb_+ \to \Rb$.  As the payoff to be made at time $\Tb$ is known at time $T$, the value of the forward rate option at time $T$ is $B(T,J_T;\Tb) h(F(T,J_T;T,\Tb))$.  Thus, the value of $v(t,i;T,\Tb)$ of a forward rate option at time $t \leq T$ given $J_t = i$ is
\begin{align}
v(t,i;T,\Tb)
	&=	\Eb \Big( \ee^{- \int_t^T r(J_s) \dd s} \psi(J_T;\Tb) \Big| J_t = i \Big) , &
\psi(J_T;\Tb)
	&:=	B(T,J_T;\Tb) h(F(T,J_T;T,\Tb)) .
\end{align}
Defining $n \times 1$ vectors $V_t^{T,T}$ and $\Psi^{\Tb}$ as follows
\begin{align}
V_t^{T,\Tb} 
	&:= (v(t,i;T,\Tb))_{i \in S} , &
\Psi^{\Tb}
	&:= (\psi(j;\Tb))_{j \in S} ,
\end{align}
we have from Proposition \ref{prop:u} and Remark \ref{rmk:ui} that
\begin{align}
V_t^{T,\Tb}
	&=	\ee^{ (T-t) (\Gb - \Rb) } \Psi^{\Tb} , &
v(t,i;T,\Tb)
	&=	\< E_i , V_t^{T,\Tb} \> .
\end{align}
In the case of a \textit{caplet} or \textit{floorlet} with strike $K$ we have $h(F) = ( F - K )^+$ and $h(F) = (K-F)^+$, respectively.
\end{example}

\begin{example}[Arrow-Debreu securities]
An \textit{Arrow-Debreu} security is a derivative that pays $\phi(J_T) = \del_j(J_T)$ at time $T \leq \Tb$ for some $j \in S$.
Denote by $A(t,i;T;j)$ the price of the $j$th Arrow-Debreu security assuming $J_t = i$.  We have from Proposition \ref{prop:u} and Remark \ref{rmk:ui} that
\begin{align}
A(t,i;T,j)
	&:=	\Eb \Big( \ee^{- \int_t^T r(J_s) \dd s} \del_j(J_T) \Big| J_t = i \Big) 
	=		\< E_i , \ee^{(T-t) (\Gb - \Rb) } E_j \> . \label{eq:ptiTj}
\end{align}
Defining the $n \times n$ matrix $\Ab_t^T := (A(t,i;T,j))_{i,j \in S}$, we have
\begin{align}
\Ab_t^T
	&=	\ee^{(T-t) (\Gb - \Rb) } . \label{eq:PtT}
\end{align}
\end{example}


%
%

\section{Replication}
\label{sec:replication}
In this section, we will show how to replicate an Arrow-Debreu security with a payoff $\del_k(J_T)$ at time $T$ by trading a portfolio of assets consisting of the money market account $M$ and $(n-1)$ zero-coupon bonds with maturity dates $T_1, T_2, \ldots, T_{n-1}$ where $T_i \in (T,\Tb)$ and $T_i \neq T_j$.  
Once we establish how to replicate an Arrow-Debreu security with payoff $\del_k(J_T)$, we can replicate a claim with general payoff $\phi(J_T)$ by noting that $\phi(J_T)$ can be written as a linear combination of Arrow-Debreu payoffs:
$\phi(J_T) = \sum_k \phi(k) \del_k(J_T)$.

\begin{proposition}
\label{prop:replication}
Denote by $X = (X_t)_{t \in [0,T]}$ the value of a self-financing portfolio with dynamics of the form
\begin{align}
\dd X_t
	&=	\sum_{i=1}^{n-1} \Del_{t-}^{(i)} \dd B_t(t,J_t;T_i) + \Big( X_t - \sum_{i=1}^{n-1} \Del_{t-}^{(i)} B_t(t,J_t;T_i) \Big) r(J_t) \dd t , \label{eq:dX}
\end{align}
where $\Del^{(i)} = (\Del_t^{(i)})_{t \in [0,T]}$ denotes the number of $T_i$-maturity bonds in the portfolio. 
Define an $(n-1) \times (n-1)$ matrix
\begin{align}
\del \Bb(t,J_{t-})
	&:= \Big( B(t,j,T_i) - B(t,J_{t-},T_i) \Big)_{i,j} , &
i
	&=	1, 2, \ldots, n-1 , &
j
	&=	1, \ldots, J_{t-}-1, J_{t-} + 1, \ldots n , 
\end{align}
and $(n-1) \times 1$ vectors
\begin{align}
D_{t-}
	&=	(\Del_{t-}^{(i)})_i , &
i
	&=	1, 2, \ldots, n-1 , \\
\del A(t,J_{t-};T,k)
	&:=	( A(t,j;T,k) - A(t,J_{t-};T,k) )_j , &
j
	&:=	1, \ldots, J_{t-}-1, J_{t-} + 1, \ldots n .
\end{align}
Suppose $X_0 = A(0,J_0;T,k)$ and $D_t \equiv D(t,J_t;T,k)$ 
satisfies
\begin{align}
\del \Bb(t,J_t) D(t,J_t;T,k)
	&=	\del A(t,J_t;T,k) . \label{eq:matrix-form}
\end{align}
Then $X_t = A(t,J_t;T,k)$ for all $t \in [0,T]$.
\end{proposition}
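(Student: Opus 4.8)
The plan is to show that the portfolio value $X$ and the target price process $t \mapsto A(t,J_t;T,k)$ obey the same dynamics and share the same initial value, so that their difference vanishes identically. Since $X_0 = A(0,J_0;T,k)$ is assumed, it suffices to prove that $Y_t := X_t - A(t,J_t;T,k)$ satisfies $\dd Y_t = r(J_{t-}) Y_t\,\dd t$; a homogeneous linear equation with $Y_0 = 0$ then forces $Y_t = 0$, i.e.\ $X_t = A(t,J_t;T,k)$ for all $t \in [0,T]$, by uniqueness (Gr\"onwall).

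First I would compute the dynamics of $A(t,J_t;T,k)$. Observing that $A(\,\cdot\,,\,\cdot\,;T,k)$ is exactly the pricing function $u$ of Proposition \ref{prop:u} associated with the terminal payoff $\phi = \del_k$, it satisfies the KBE $0 = (\d_t + \Gc(i) - r(i))A(t,i;T,k)$ on $[0,T]$. Applying the It\^o/Dynkin formula for the pure-jump process \eqref{eq:dJ} to $f(t,J_t) := A(t,J_t;T,k)$ and using the KBE to cancel the finite-variation part $(\d_t + \Gc)A$ against $rA$, I obtain $\dd A(t,J_t;T,k) = r(J_{t-}) A(t,J_{t-};T,k)\,\dd t + \dd M_t^A$, where $M^A$ is the compensated jump martingale $\dd M_t^A = \int \big( A(t,J_{t-}+z;T,k) - A(t,J_{t-};T,k)\big)\big(N(\dd t, J_{t-},\dd z) - \nu(J_{t-},\dd z)\,\dd t\big)$. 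Each bond $B(t,J_t;T_i)$ is the same pricing function with constant payoff $\phi \equiv 1$ and terminal date $T_i > T$, so on $[0,T]$ it likewise satisfies $\dd B(t,J_t;T_i) = r(J_{t-})B(t,J_{t-};T_i)\,\dd t + \dd M_t^{B_i}$ with $M^{B_i}$ the analogous compensated jump martingale.

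Next I would substitute these bond dynamics into the self-financing equation \eqref{eq:dX}. The deterministic cash term and the bond drift terms combine: the $r(J_{t-})B(t,J_{t-};T_i)$ contributions cancel against the subtracted bond holdings, leaving $\dd X_t = r(J_{t-}) X_t\,\dd t + \sum_{i=1}^{n-1}\Del_{t-}^{(i)}\,\dd M_t^{B_i}$. Subtracting the dynamics of $A$ then yields $\dd Y_t = r(J_{t-}) Y_t\,\dd t + \big(\sum_{i=1}^{n-1}\Del_{t-}^{(i)}\,\dd M_t^{B_i} - \dd M_t^A\big)$, so the entire jump (hence martingale) part of $Y$ is the bracketed term.

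The crux — and the step where the hedging equation \eqref{eq:matrix-form} enters — is to show this martingale part vanishes. Because $M^A$ and the $M^{B_i}$ are all driven by the \emph{same} jump measure $N(\dd t, J_{t-},\dd z)$, the bracket is zero as a process precisely when, for every current state $J_{t-}$ and every admissible target state $j \neq J_{t-}$, the portfolio's bond value jump matches the Arrow--Debreu value jump, i.e.\ $\sum_{i=1}^{n-1}\Del_{t-}^{(i)}\big(B(t,j;T_i) - B(t,J_{t-};T_i)\big) = A(t,j;T,k) - A(t,J_{t-};T,k)$. Assembling these $n-1$ equations (one per target state $j$) into a single matrix identity is exactly \eqref{eq:matrix-form}, which the holdings $D(t,J_t;T,k)$ are assumed to satisfy; this is also why $n-1$ distinct bonds are required, and where one should note that $\del\Bb(t,J_t)$ must be invertible for such $D$ to exist. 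With the bracketed martingale eliminated, $Y$ is a continuous finite-variation process solving $\dd Y_t = r(J_{t-})Y_t\,\dd t$ with $Y_0 = X_0 - A(0,J_0;T,k) = 0$; integrating gives $Y_t = Y_0\exp(\int_0^t r(J_s)\,\dd s) = 0$, hence $X_t = A(t,J_t;T,k)$ on $[0,T]$. The main obstacle is the careful bookkeeping of the jump martingales and the recognition that matching them transition-by-transition is equivalent to the linear system \eqref{eq:matrix-form}; the concluding ODE step is then routine.
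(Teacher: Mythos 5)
Your argument is correct, and it reaches the key identity --- that the hedge works if and only if the portfolio's jump in value matches the Arrow--Debreu security's jump for every admissible target state $j \neq J_{t-}$, which is precisely the linear system \eqref{eq:matrix-form} --- in the same way the paper does. Where you differ is in the treatment of the finite-variation part. The paper discounts everything by the money market account, writes the dynamics of $B(t,J_t;T_i)/M_t$, $A(t,J_t;T,k)/M_t$ and $X_t/M_t$, and disposes of the $\dd t$ terms by appealing to the fact that these discounted processes are $(\Qb,\Fb)$-martingales, so that only the compensated-jump integrals remain to be compared. You instead stay undiscounted, use the Kolmogorov backward equation from Proposition \ref{prop:u} to compute the drifts of $A$ and of each bond explicitly as $r(J_{t-})$ times the price, observe that the self-financing condition \eqref{eq:dX} makes the bond drifts cancel against the cash account, and conclude that $Y_t = X_t - A(t,J_t;T,k)$ solves the homogeneous linear ODE $\dd Y_t = r(J_{t-})Y_t\,\dd t$ with $Y_0=0$, hence vanishes by Gr\"onwall. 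Your route is slightly more self-contained in that it does not lean on the (true but unproved-in-the-proof) assertion that the $\dd t$ terms of $X/M$ ``must cancel'' because $X/M$ is a martingale; the price is that you must invoke the KBE for the bond and Arrow--Debreu pricing functions, which Proposition \ref{prop:u} supplies. Your closing remark that $\del\Bb(t,J_t)$ must be invertible for the hedge to exist is a correct observation about existence, though not needed for the proposition as stated, since $D$ is assumed to satisfy \eqref{eq:matrix-form}.
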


\begin{proof}
We will show that, if $X_0 = A(0,J_0;T,k)$ and $D_{t-}$ 
satisfies \eqref{eq:matrix-form}, then $\dd (X_t/M_t) = \dd (A(t,J_t;T,k)/M_t)$ from which it follows that $X_t = A(t,J_t;T,k)$ for all $t \in [0,T]$.  We begin by computing the dynamics of $B(t,J_t;T_i)$ and $A(t,J_t;T,k)$.  
Defining the \textit{compensated} state-dependent Poisson random measure
\begin{align}
\Nt(\dd t, i, \dd z)
	&:= N( \dd t , i, \dd z) - \nu(i, \dd z) \dd t ,
\end{align}
and using \eqref{eq:dJ} as well as It\^o's rule for L\'evy-It\^o processes, we obtain
\begin{align}
\dd B(t,J_t;T_i)
	&=	( \ldots ) \dd t + \int \Big( B(t,J_{t-}+z,T_i) - B(t,J_{t-},T_i) \Big) \Nt(\dd t, J_{t-}, \dd z) ,  \label{eq:dB} \\
\dd A(t,J_t;T,k)
	&=	( \ldots ) \dd t + \int \Big( A(t,J_{t-}+z,T,k) - A(t,J_{t-},T,k) \Big) \Nt(\dd t, J_{t-}, \dd z) ,		\label{eq:dp} 
\end{align}
where the $\dd t$ terms will not be important.  Next, using \eqref{eq:dM}, \eqref{eq:dB} an \eqref{eq:dp}, as well as the product rule for L\'evy-It\^o processes, we find
\begin{align}
\dd \Big( \frac{B(t,J_t;T_i)}{M_t} \Big)
	&=	\frac{1}{M_t} \int \Big( B(t,J_{t-}+z,T_i) - B(t,J_{t-},T_i) \Big) \Nt(\dd t, J_{t-}, \dd z)  \\
	&=	\frac{1}{M_t} \dd B(t,J_t;T_i) + (\ldots) \dd t , \label{eq:dB-over-M} \\
\dd \Big( \frac{A(t,J_t;T,k)}{M_t} \Big)
	&=	\frac{1}{M_t} \int \Big( A(t,J_{t-}+z,T,k) - A(t,J_{t-},T,k) \Big) \Nt(\dd t, J_{t-}, \dd z) , \label{eq:dp-over-M}
\end{align}
Similarly, using \eqref{eq:dM} and \eqref{eq:dX} as well as the product rule for L\'evy-It\^o processes, we obtain
\begin{align}
&\dd \Big( \frac{X_t}{M_t} \Big)
	=	\frac{1}{M_t} \dd X_t + X_t \dd \Big( \frac{1}{M_t} \Big) + \dd \Big[ X, \frac{1}{M} \Big]_t \\
	&=	\frac{1}{M_t} \dd X_t - r(J_t) \Big( \frac{X_t}{M_t} \Big) \dd t &
	&		\text{(as $\dd \Big[ X, \frac{1}{M} \Big]_t = 0$)} \\
	&=	\frac{1}{M_t} \sum_{i=1}^{n-1} \Del_{t-}^{(i)} \dd B_t(t,J_t;T_i) 
			+ \frac{1}{M_t} \Big( X_t - \sum_{i=1}^{n-1} \Del_{t-}^{(i)} B_t(t,J_t;T_i) \Big) r(J_t) \dd t
			- r(J_t) \Big( \frac{X_t}{M_t} \Big) \dd t &
	&		\text{(by eq. \eqref{eq:dX})} \\
	&=	\frac{1}{M_t} \sum_{i=1}^{n-1} \Del_{t-}^{(i)} \dd B_t(t,J_t;T_i) 
			- \frac{1}{M_t} \sum_{i=1}^{n-1} \Del_{t-}^{(i)} B_t(t,J_t;T_i) r(J_t) \dd t &
	&		\text{(canceling $\frac{r(J_t) X_t }{ M_t } \dd t$)} \\
	&=	\sum_{i=1}^{n-1} \Del_{t-}^{(i)} \Big( \dd \Big( \frac{B_t(t,J_t;T_i) }{M_t} \Big) - \frac{1}{M_t} ( \ldots ) \dd t \Big)
			- \frac{1}{M_t} \sum_{i=1}^{n-1} \Del_{t-}^{(i)} B_t(t,J_t;T_i) r(J_t) \dd t &
	&		\text{(by eq. \eqref{eq:dB-over-M})} \\
	&=	\frac{1}{M_t} \sum_{i=1}^{n-1} \Del_{t-}^{(i)}  \int \Big( B(t,J_{t-}+z,T_i) - B(t,J_{t-},T_i) \Big) \Nt(\dd t, J_{t-}, \dd z) , \label{eq:dX-over-M}
\end{align}
where, in the last equality, we have used \eqref{eq:dB-over-M} and the fact that the $\dd t$ terms must cancel, as $X/M$ is a $(\Qb,\Fb)$ martingale.
Comparing \eqref{eq:dp-over-M} an \eqref{eq:dX-over-M}, we see that in order for $\dd (X_t/M_t) = \dd (A(t,J_t;T,k)/M_t)$, we must have
\begin{align}
&\sum_{i=1}^{n-1} \Del_{t-}^{(i)}  \int \Big( B(t,J_{t-}+z,T_i) - B(t,J_{t-},T_i) \Big) \Nt(\dd t, J_{t-}, \dd z) \\
	&= \int \Big( A(t,J_{t-}+z,T,k) - A(t,J_{t-},T,k) \Big) \Nt(\dd t, J_{t-}, \dd z) .
\end{align}
Noting from \eqref{eq:dJ} that $J$ can only jump to a state $j \neq J_{t-}$, we can write the above equality as
\begin{align}
\sum_{i=1}^{n-1} \Del_{t-}^{(i)} \Big( B(t,j,T_i) - B(t,J_{t-},T_i) \Big)
	&=	A(t,j;T,k) - A(t,J_{t-};T,k) , &
	&\forall \, j \neq J_{t-} .
\end{align}
which, in matrix form, is \eqref{eq:matrix-form}.
\end{proof}

\begin{remark}[Completeness of the market]
Observe that, if zero-coupon bonds trade at $(n-1)$ distinct maturities, then the market described in Section \ref{sec:model} is complete (at least, up until the time of the shortest maturity).
\end{remark}

\begin{remark}[Replication with fewer than $(n-1)$ bonds]
If, for all $i \in S$, the CTMC $J$ can only jump to $m-1 \leq n-1$ states, then replication of any European-style claim can be achieved by trading $m-1$ bonds and the money market account. For example if, for any jump time $\tau$, we have $J_\tau - J_{\tau-} \in \{-1,1\}$, then any European-style claims can be replicated by trading two bonds on the money market account.
\end{remark}

%
%

\section{Ross recovery}
\label{sec:ross}
In this section, we use an extension of the Ross Recovery Theorem \cite{ross}, due to \cite{linetsky-qin}, in order to determine the dynamics of $J$ under the physical (i.e., real-world) probability measure $\Pb$.  While the dynamics of $J$ under $\Pb$ are not important for the purposes of pricing and replicating financial derivatives, real-world dynamics are important for designed optimal investment strategies.

\begin{proposition}
\label{thm:ross}
Suppose that, for some constant $\rho < 0$, we have
\begin{align}
( \Gb - \Rb ) \pi
	&=	\rho \pi , 
\end{align}
where $\pi(i) > 0$ for all $i \in S$.  Then the generator matrix of $J$ under the real-world probability measure $\Pb$ is given by
\begin{align}
\Gb^\pi
	&=	\( \begin{array}{cccc}
			g_{1,1}^\pi & g_{1,2}^\pi & \ldots & g_{1,n}^\pi \\
			g_{2,1}^\pi & g_{2,2}^\pi & \ldots & g_{2,n}^\pi \\
			\vdots & \vdots & \ddots & \vdots \\
			g_{n,1}^\pi & g_{n,2}^\pi & \ldots & g_{n,n}^\pi 
			\end{array} \) , &
g_{i,j}^\pi
	&= \frac{\pi(j)}{\pi(i)}g_{i,j} , \qquad i \neq j , &
g_{i,i}
	&=	- \sum_{j \neq i} g_{i,j}^\pi . \label{eq:G-pi}
\end{align}
\end{proposition}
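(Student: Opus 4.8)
The plan is to obtain $\Pb$ from $\Qb$ by an explicit change of measure built from the eigenvector $\pi$, following the recovery construction of \cite{linetsky-qin}, and then to read off the $\Pb$-generator of $J$ by a Girsanov argument for the jump measure $N$. The starting observation is that $\Gb-\Rb$ generates the pricing semigroup: by \eqref{eq:PtT}, $\Eb(\ee^{-\int_t^T r(J_s)\dd s}f(J_T)\mid J_t=i)$ is the $i$-th component of $\ee^{(T-t)(\Gb-\Rb)}$ applied to $(f(j))_j$. Hence the eigen-relation $(\Gb-\Rb)\pi=\rho\pi$ lifts to the semigroup as $\ee^{(T-t)(\Gb-\Rb)}\pi=\ee^{\rho(T-t)}\pi$, i.e.
\begin{align}
\Eb\Big(\ee^{-\int_t^T r(J_s)\dd s}\pi(J_T)\,\Big|\,J_t=i\Big)=\ee^{\rho(T-t)}\pi(i).
\end{align}

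First I would introduce the candidate Radon--Nikodym density
\begin{align}
L_t:=\frac{1}{\pi(J_0)}\,\ee^{-\rho t}\,\ee^{-\int_0^t r(J_s)\dd s}\,\pi(J_t),
\end{align}
and verify that it is a strictly positive $(\Qb,\Fb)$-martingale with $L_0=1$. Positivity is immediate from $\pi>0$; the martingale property follows from the displayed eigenfunction identity together with the Markov property (tower rule), which is exactly the statement that $\ee^{-\rho t}\ee^{-\int_0^t r\,\dd s}\pi(J_t)$ is a $\Qb$-martingale. Because $S$ is finite and $\Tb<\infty$, $L$ is bounded on $[0,\Tb]$, hence a genuine martingale, so $\dd\Pb/\dd\Qb|_{\Fc_t}:=L_t$ defines an equivalent measure $\Pb$; this is the Ross/Qin--Linetsky recovery measure associated with the principal eigenpair $(\rho,\pi)$.

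Next I would compute the $\Pb$-dynamics of $J$ via the Girsanov theorem for marked point processes. The key point is that the continuous factor $\ee^{-\rho t}\ee^{-\int_0^t r\,\dd s}$ does not jump, so at a transition of $J$ from $i$ to $j$ the density satisfies $L_t/L_{t-}=\pi(j)/\pi(i)$; writing this as the predictable multiplier $Y(t,z)=\pi(J_{t-}+z)/\pi(J_{t-})$ acting on the jump measure $N(\dd t,J_{t-},\dd z)$ of \eqref{eq:dJ}, Girsanov gives that under $\Pb$ the compensator of $N$ becomes $Y(t,z)\,\nu(J_{t-},\dd z)\,\dd t$. Reading off the mass at $z=j-i$ from $\nu(i,\dd z)=\sum_{j\neq i}g_{i,j}\del_{j-i}(z)\dd z$ yields the $\Pb$-transition rates $g_{i,j}^\pi=\tfrac{\pi(j)}{\pi(i)}g_{i,j}$ for $j\neq i$, which is \eqref{eq:G-pi}.

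It remains to confirm that $\Gb^\pi$ really is a generator, i.e.\ that its rows sum to zero, and this is exactly where the eigenvalue equation earns its keep; I expect this closure check to be the conceptual crux rather than any analytic difficulty. Summing the off-diagonal $\Pb$-rates gives $\sum_{j\neq i}g_{i,j}^\pi=\tfrac{1}{\pi(i)}\sum_{j\neq i}g_{i,j}\pi(j)$, while the $i$-th row of $(\Gb-\Rb)\pi=\rho\pi$ reads $\sum_{j\neq i}g_{i,j}\pi(j)=\pi(i)\big(\rho-g_{i,i}+r(i)\big)$; defining the diagonal by $g_{i,i}^\pi:=-\sum_{j\neq i}g_{i,j}^\pi$ then makes each row of $\Gb^\pi$ vanish, so $\Gb^\pi$ is a bona fide generator. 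The hypotheses $\pi>0$ and $\rho<0$ guarantee, via Perron--Frobenius applied to the irreducible nonnegative matrix $\Gb-\Rb+cI$, that such a principal eigenpair exists and that $\Pb\sim\Qb$, but since they are assumed I would invoke them only where positivity of $L$ and of the recovered rates is needed.
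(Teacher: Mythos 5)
Your construction of the density process and the Girsanov step match the paper's proof: the paper likewise builds the martingale $Z_t = \ee^{-\rho t}\ee^{-\int_0^t r(J_s)\dd s}\pi(J_t)/\pi(J_0)$ (written there as $\Pi(t,J_t;T)/M_t$ normalized at $t=0$), changes measure with it, and applies Girsanov for L\'evy--It\^o processes to obtain the tilted compensator $\nu^\pi(i,\dd z)=\sum_{j\neq i}\tfrac{\pi(j)}{\pi(i)}g_{i,j}\del_{j-i}(z)\dd z$, hence the rates in \eqref{eq:G-pi}. That part of your argument is sound.

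However, there is a genuine gap at the central step. The proposition asserts that $\Gb^\pi$ is the generator of $J$ under \emph{the real-world measure} $\Pb$, and you simply \emph{define} $\Pb$ by $\dd\Pb/\dd\Qb|_{\Fc_t}:=L_t$, calling it ``the Ross/Qin--Linetsky recovery measure.'' What you have actually shown is only that there exists an equivalent measure under which $J$ has generator $\Gb^\pi$ --- but $\Qb$ itself is already such a measure (with generator $\Gb$), so existence of an equivalent measure with some generator proves nothing about the physical measure. The identification $\Pb^\pi=\Pb$ is the entire content of the recovery theorem and must be argued, not stipulated. The paper does this by showing that prices admit a \emph{transition-independent pricing kernel}, $\Pc_{T-t}\phi(i)=\pi(i)\ee^{\rho(T-t)}\Eb^\pi\big(\phi(J_T)/\pi(J_T)\,\big|\,J_t=i\big)$, and then invoking \cite[Theorem 3.3]{linetsky-qin}, which says that under the standing assumptions (irreducible, positive recurrent finite-state chain) such a kernel pins down the real-world measure uniquely as $\Pb^\pi$. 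Your proof needs this step --- or an equivalent uniqueness argument for the positive eigenfunction --- to close. A minor additional point: your ``closure check'' paragraph claims the eigenvalue equation is needed to make the rows of $\Gb^\pi$ sum to zero, but as you yourself note at the end, the diagonal is defined precisely so that the rows vanish; the eigenvalue relation plays no role there, and presenting it as the ``conceptual crux'' mislocates where the real work lies (namely, in the identification of the measure).
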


\begin{proof}
Defining
\begin{align}
\Pi_t^T 
	&:= \ee^{\rho (T - t)} \pi , &
\Pi(t,i;T)
	&=	\< E_i , \Pi_t^T \> = \ee^{\rho (T-t)} \pi(i) ,
\end{align}
it is easy to see that $\Pi_t^T$ satisfies the pricing equation
\begin{align}
0
	&=	( \d_t + \Gb - \Rb ) \Pi_t^T , &
\Pi_T^T	
	&=	\pi  .
\end{align}
Thus, we have that $\Pi(t,J_t;T)$ is the value at time $t \leq T$ of a derivative that pays $\pi(J_T)$ at time $T$.  It follows that $\Pi(t,J_t;T)/M_t$ is a $(\Qb,\Fb)$-martingale, and we can define a new probability measure $\Pb^\pi$, whose relation to $\Qb$ is characterized by the following Radon-Nikodym derivative process
\begin{align}
\frac{\dd \Pb^\pi}{\dd \Qb} \equiv Z_T
	&:=	\frac{\Pi(T,J_T;T)/M_T}{\Pi(0,J_0;T)/M_0 } 
	=		\ee^{ - \int_0^T r(J_s) \dd s - \rho T} \Big( \frac{\pi(J_T)}{\pi(J_0)} \Big) , &
Z_t
	&=	\Eb_t Z_T .
\end{align}
Thus, denoting by $\Eb^\pi$ expectation under $\Pb^\pi$, the value $u(t,J_t;T)$ of a financial derivative that pays $\phi(J_T)$ at time $T$ satisfies
\begin{align}
\frac{u(t,J_t;T)}{M_t}
	&=	\Eb \Big( \frac{\phi(J_T)}{M_T} \Big| \Fc_t \Big)  
	 =	\Eb^\pi \Big( \frac{Z_t}{Z_T} \frac{\phi(J_T)}{M_T} \Big| \Fc_t \Big)  
	 =	\Eb^\pi \Big( \frac{M_T \Pi(t,J_t;T)}{M_t \Pi(T,J_T;T)} \frac{\phi(J_T)}{M_T} \Big| \Fc_t \Big) .
\end{align}
Canceling common factors of $M_t$ and $M_T$ and using the Markov property of $J$, we find
\begin{align}
u(t,J_t;T)
	&=	\Pi(t,J_t;T) \Eb^\pi \Big( \frac{ \phi(J_T) }{\Pi(T,J_T;T)}  \Big| \Fc_t \Big)
	=		\pi(J_t) \ee^{\rho (T-t)} \Eb^\pi \Big( \frac{ \phi(J_T) }{\pi(J_T)} \Big| J_t  \Big) .
\end{align}
Thus, we have a \textit{transition independent pricing kernel}
\begin{align}
\Pc_{T-t} \phi(i)
	&:=	\pi(i) \ee^{\rho (T-t)} \Eb^\pi \Big( \frac{ \phi(J_T) }{\pi(J_T)} \Big| J_t = i \Big) . \label{eq:TIPK}
\end{align}
Under the Assumptions of Section \ref{sec:model}, when one has a pricing kernel of the form \eqref{eq:TIPK}, we have by \cite[Theorem 3.3]{linetsky-qin} that $\Pb^\pi$ \textit{is} the real-world probability measure: $\Pb^\pi = \Pb$.  
\\[0.5em]
With the aim of deducing the dynamics of $J$ under $\Pb^\pi \equiv \Pb$, we compute
\begin{align}
\dd \Big( \frac{\Pi(t,J_t;T)}{M_t} \Big)
	&=	\frac{1}{M_t} \int \Big( \Pi(t,J_{t-}+z;T) - \Pi(t,J_{t-};T) \Big) \Nt(\dd t, J_{t-}, \dd z) \\
	&=	\frac{\Pi(t,J_{t-};T)}{M_{t-}} \int \Big( \frac{ \Pi(t,J_{t-}+z;T) }{ \Pi(t,J_{t-};T) } - 1 \Big) \Nt(\dd t, J_{t-}, \dd z) \\
	&=	\frac{\Pi(t,J_{t-};T)}{M_{t-}} \int \Big( \ee^{\eta(J_{t-},z)} - 1 \Big) \Nt(\dd t, J_{t-}, \dd z) , \\
\eta(J_{t-},z)
	&:=	\log \Big( \frac{ \Pi(t,J_{t-}+z;T) }{ \Pi(t,J_{t-};T) } \Big) 
	=		\log \Big( \frac{ \pi(J_{t-}+z) }{ \pi(J_{t-}) } \Big) ,
\end{align}
from which it follows that
\begin{align}
\frac{\dd \Pb^\pi}{\dd \Qb}
	&=	\frac{\Pi(T,J_T;T)/M_T}{\Pi(0,J_0;T)/M_0 }
	=		\exp \Big( - \int_0^T \int \Big( \ee^{\eta(J_{t-},z)} - 1 -  \eta(J_{t-},z)  \Big) \nu(J_{t-}, \dd z) \dd t
				+ \int_0^T \int \eta(J_{t-},z) \Nt(\dd t, J_{t-}, \dd z) \Big) .
\end{align}
Hence, by Girsanov's Theorem for L\'evy-It\^o processes (see, e.g., \cite[Theorem 1.35]{oksendal-sulem}), we have
\begin{align}
\Eb_{t-}^\pi N(\dd t,J_{t-}, \dd z)
	&=	\nu^\pi(J_{t-}, \dd z) \dd t , &
\nu^\pi(i, \dd z)
	&:=	\ee^{\eta(i,z)} \nu(i, \dd z)
	=		\sum_{j \neq i} \frac{\pi(j)}{\pi(i)} g_{i,j} \del_{j-i}(z) \dd z . \label{eq:nu-pi}
\end{align}
From \eqref{eq:nu-pi}, we see that the generator matrix of $J$ under $\Pb^\pi$ is given by \eqref{eq:G-pi}.
\end{proof}

%
%

\section{Example: a two-state CTMC}
\label{sec:examples}
Throughout this section, we suppose that the generator matrix $\Gb$ of $J$ under $\Qb$ and the matrix $\Rb$ of interest rates are given by
\begin{align}
\Gb
	&=	\( \begin{array}{cc}
			- \lam & \lam \\
			\lam & - \lam
			\end{array} \) , &
\Rb
	&=	\( \begin{array}{cc}
			0 & 0 \\
			0 & r
			\end{array} \) ,
\end{align}
where $\lam, r > 0$.  The eigenvalues and corresponding ($L^2$ normalized right) eigenvectors of $\Gb - \Rb$, denoted $\rho_\pm$ and $\pi_\pm$, respectively, are given by
\begin{align}
\rho_\pm
	&=	(-2 \lam - r \pm \gam) / 2 , &
\pi_\pm
	&= \frac{1}{\sqrt{(r \pm \gam)^2 + (2 \lam)^2}} \( \begin{array}{c}
			r \pm \gam \\
			2 \lam 
			\end{array} \) , &
\gam
	&:=	\sqrt{ (2 \lam)^2 + r^2 } .
\end{align}
Noting that $(\Gb - \Rb)$ is symmetric, and thus its eigenvectors are orthogonal, we have from \eqref{eq:PtT} that matrix of Arrow-Debreu security prices is
\begin{align}
\Ab_t^T
	&=	\ee^{(T-t)(\Gb-\Rb)}
	 =	\( \begin{array}{cc}
			\pi_+ &
			\pi_-
			\end{array} \)
			\( \begin{array}{cc}
			\ee^{(T-t) \rho_+} &  0 \\
			0 & \ee^{(T-t) \rho_-}
			\end{array} \)
			\( \begin{array}{c}
			\pi_+^\top \\ \pi_-^\top 
			\end{array} \) \\ 
	&=	\frac{1}{2 \gam} \ee^{-\frac{1}{2} (T-t) (\gam +2 \lam +r)} 
		\( \begin{array}{cc}
		(\gam -r) +(\gam +r) \ee^{\gam  (T-t)}  &
		2 \lam  (\ee^{\gam  (T-t)}-1) \\
		2 \lam  (\ee^{\gam  (T-t)}-1) &
		(\gam+r) +(\gam -r) \ee^{\gam  (T-t)}
	\end{array} \) .
\end{align}
Next, we have from \eqref{eq:BtT} that vector of zero-coupon bond prices is
\begin{align}
B_t^T
	&=	\ee^{(T-t)(\Gb-\Rb)} \mathds{1}
	=		\( \begin{array}{cc}
			\pi_+ &
			\pi_-
			\end{array} \)
			\( \begin{array}{cc}
			\ee^{(T-t) \rho_+} &  0 \\
			0 & \ee^{(T-t) \rho_-}
			\end{array} \)
			\( \begin{array}{c}
			\pi_+ ^\top\\ \pi_-^\top
			\end{array} \) 
			\( \begin{array}{c}
			1 \\ 1
			\end{array} \) \\
	&=	\frac{1}{2 \gam} \ee^{-\frac{1}{2} (T-t) (\gam +2 \lam +r)} 
			\( \begin{array}{c}
			( \gam -2 \lam -r ) + \ee^{\gam  (T-t)} (\gam +2 \lam +r)  \\
			( \gam -2 \lam +r ) + \ee^{\gam  (T-t)} (\gam +2 \lam -r)
			\end{array} \) .
\end{align}
From the above expression as well as the definition of the yield \eqref{eq:yield} we have
\begin{align}
\lim_{T \to \infty} Y(t,i;T) \equiv Y(t,i;\infty)
	&=	( r + 2 \lam - \gam)/2 .
\end{align}
In Figure \ref{fig:yield}, we plot the yield $Y(t,i;T)$ as a function of $T$ for $i \in \{1,2\}$ with $t=0$ fixed.  For comparison, we also plot limiting yield $Y(t,i,\infty)$.
\\[0.5em]
As the state space of $J$ is $S=\{1,2\}$, we need only one bond and the money market account to replicate an Arrow-Debreu security.
Thus, dynamics of the replicating portfolio $X$ are of the form
\begin{align}
\dd X_t
	&=	\Del_{t-}^{(1)} \dd B(t,J_t;T_1) + \Big( X_t - \Del_t^{(1)} B(t,J_t;T_1) \Big) r(J_t) \dd t .
\end{align}
Using \eqref{eq:matrix-form}, the number of bonds $\Del_t^{(1)}$ one should hold in the portfolio in order to replicate the $k$th Arrow-Debreu security with maturity $T$ is
\begin{align}
\Del_{t}^{(1)}
	&=	\left\{ \begin{aligned}
			&\frac{ A(t,2;T,k) - A(t,1;T,k) }{ B(t,2;T_1) - B(t,1;T_1) } , &\text{if $J_{t-} = 1$} \\
			&\frac{ A(t,1;T,k) - A(t,2;T,k) }{ B(t,1;T_1) - B(t,2;T_1) } , &\text{if $J_{t-} = 2$}
			\end{aligned} \right\}
	=	\frac{ A(t,2;T,k) - A(t,1;T,k) }{ B(t,2;T_1) - B(t,1;T_1) } ,
\end{align}
which, in this example, happens to be independent of the state of $J$.
\\[0.5em]
In order to find the dynamics of $J$ under the real-world probability measure $\Pb^\pi$ we note that
\begin{align}
(\Gb - \Rb) \pi_+
	&=	\rho_+ \pi_+ , &
\pi_+(i)
	&>	0 , & 
i
	&\in \{1,2\} .
\end{align}
It follows from Proposition \ref{thm:ross} that the generator matrix of $J$ under the real-world measure $\Pb^\pi$ is
\begin{align}
\Gb^\pi
	&=	\( \begin{array}{cc}
			- \frac{\pi_+(2)}{\pi_+(1)} \lam & \frac{\pi_+(2)}{\pi_+(1)} \lam \\
			\frac{\pi_+(1)}{\pi_+(2)} \lam		& - \frac{\pi_+(1)}{\pi_+(2)} \lam
			\end{array} \) 
	 =	\( \begin{array}{cc}
			- \Big( \frac{2 \lam^2}{\gam + r} \Big) & \frac{2 \lam^2}{\gam + r} \\
			\frac{\gam + r }{2}		& - \Big( \frac{\gam + r }{2} \Big) 
			\end{array} \) .
\end{align}

%
%

\bibliography{references}

\begin{thebibliography}{7}
\providecommand{\natexlab}[1]{#1}
\providecommand{\url}[1]{\texttt{#1}}
\expandafter\ifx\csname urlstyle\endcsname\relax
  \providecommand{\doi}[1]{doi: #1}\else
  \providecommand{\doi}{doi: \begingroup \urlstyle{rm}\Url}\fi

\bibitem[Ahn et~al.(2002)Ahn, Dittmar, and Gallant]{ahn2002quadratic}
D.-H. Ahn, R.~F. Dittmar, and A.~R. Gallant.
\newblock Quadratic term structure models: Theory and evidence.
\newblock \emph{The Review of Financial Studies}, 15\penalty0 (1):\penalty0
  243--288, 2002.

\bibitem[Carr and Yu(2012)]{carryu2012}
P.~Carr and J.~Yu.
\newblock Risk, return, and {R}oss recovery.
\newblock \emph{Journal of Derivatives}, 20\penalty0 (1):\penalty0 38--59,
  2012.

\bibitem[Elliott and Mamon(2003)]{elliot-mamon}
R.~J. Elliott and R.~S. Mamon.
\newblock A complete yield curve description of a {M}arkov interest rate model.
\newblock \emph{International Journal of Theoretical and Applied Finance},
  06\penalty0 (04):\penalty0 317--326, 2003.
\newblock \doi{10.1142/S0219024903001852}.

\bibitem[Filipovic(2009)]{filipovic2009term}
D.~Filipovic.
\newblock \emph{Term-Structure Models. A Graduate Course.}
\newblock Springer, 2009.

\bibitem[{\O}ksendal and Sulem(2019)]{oksendal-sulem}
B.~{\O}ksendal and A.~Sulem.
\newblock \emph{Stochastic Control of Jump Diffusions Stochastic control}.
\newblock Springer, 2019.

\bibitem[Qin and Linetsky(2016)]{linetsky-qin}
L.~Qin and V.~Linetsky.
\newblock Positive eigenfunctions of markovian pricing operators:
  {H}ansen-{S}cheinkman factorization, {R}oss recovery, and long-term pricing.
\newblock \emph{Operations Research}, 64\penalty0 (1):\penalty0 99--117, 2016.
\newblock \doi{10.1287/opre.2015.1449}.

\bibitem[Ross(2015)]{ross}
S.~Ross.
\newblock The recovery theorem.
\newblock \emph{The Journal of Finance}, 70\penalty0 (2):\penalty0 615--648,
  2015.
\newblock \doi{https://doi.org/10.1111/jofi.12092}.

\end{thebibliography}

\clearpage

\begin{figure}
\includegraphics[width=1.0\textwidth]{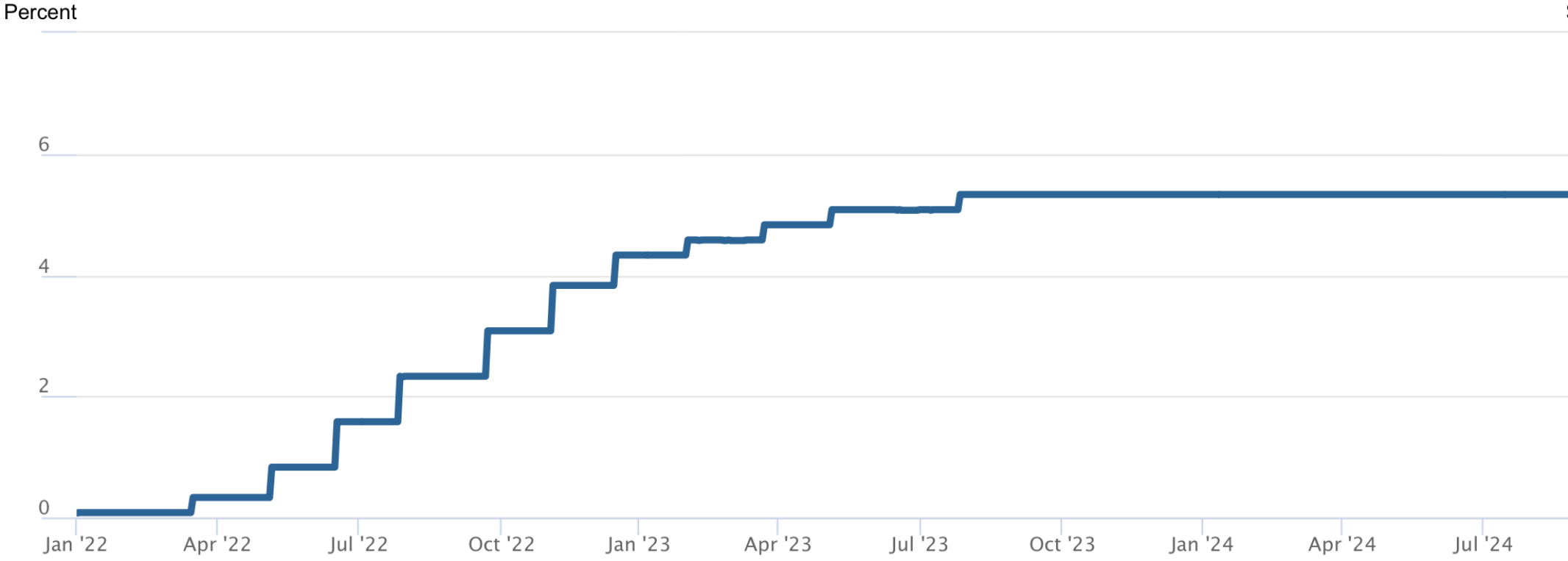}
\caption{Effective federal funds rate (EFFR) from January 1, 2022 to September 1, 2024.  Source: \url{https://www.newyorkfed.org/markets/reference-rates/effr}.}
\label{fig:fed-funds}
\end{figure}

\begin{figure}
\includegraphics[width=1.0\textwidth]{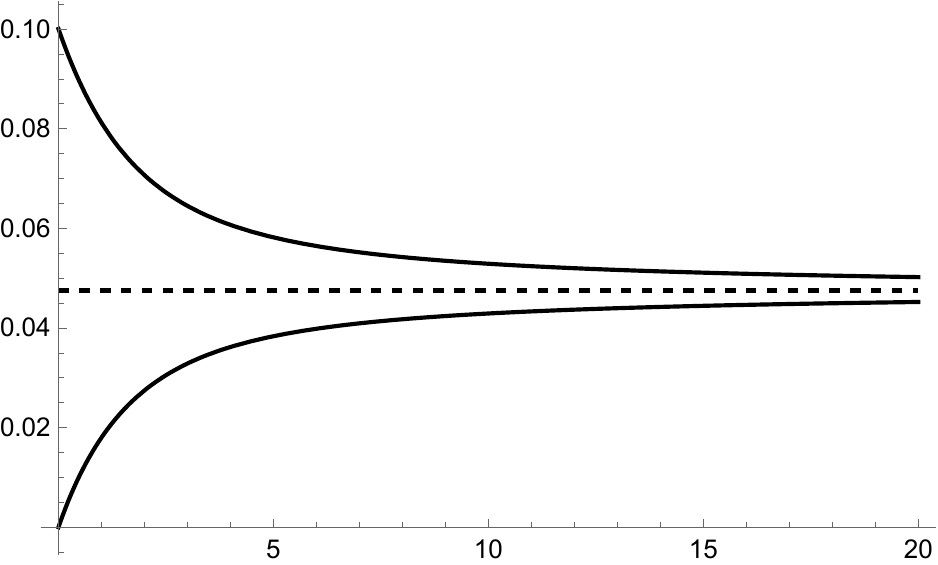}
\caption{For the model considered in Section \ref{sec:examples}, we plot the yield $Y(t,i;T)$ as a function of $T$ with $t=0$ fixed.  The solid curves below and above the dashed line correspond to $i = 1$ and $i=2$, respectively.  The dashed line is the yield in the limit as maturity tends to infinity: $\lim_{T \to \infty} Y(t,i;T)$.  Parameters used in this plot are $\lam=1/2$ and $r = 0.1$.}
\label{fig:yield}
\end{figure}

\end{document}